\newcommand{\pluseq}{\mathrel{+}=}
\newtheorem{definition}{Definition}
\begin{document}

\title{Streaming Temporal Graphs: Subgraph Matching\\
\thanks{Sandia National Laboratories University Programs funded this work.}
}

\author{\IEEEauthorblockN{Eric L. Goodman}
\IEEEauthorblockA{\textit{Mission Algorithms Research and Solutions} \\
\textit{Sandia National Laboratories / CU Boulder}\\
Albuquerque, NM, USA \\
elgoodm@sandia.gov}
\and
\IEEEauthorblockN{Dirk Grunwald}
\IEEEauthorblockA{\textit{Department of Computer Science} \\
\textit{CU Boulder}\\
Boulder, CO, USA \\
dirk.grunwald@colorado.edu}
}

\maketitle

\IEEEpubidadjcol



\begin{abstract}
We investigate solutions to subgraph matching within a temporal
stream of data.  We present a high-level language for describing
temporal subgraphs of interest, the \emph{Streaming Analytics Language} (SAL).
SAL programs are translated into C++ code that is run in parallel
on a cluster.  We call this implementation of SAL 
the \emph{Streaming Analytics Machine} (SAM).  SAL programs are succinct, 
requiring about 20 times fewer lines of code than
using the SAM library directly, or writing an implementation using Apache Flink.
To benchmark SAM we calculate finding temporal triangles within streaming
netflow data.  Also, we compare SAM to an implementation written for
Flink.  We find that SAM is able to scale to 128 nodes or 2560 cores, 
while Apache Flink has max throughput with 32 nodes and degrades thereafter.  
Apache Flink has an advantage when triangles are rare, with 
max aggregate throughput for Flink at 32 nodes greater than the max
achievable rate of SAM.  In our experiments, 
when triangle occurrence was faster
than five per second per node, SAM performed better.  Both
frameworks may miss results due to latencies in network communication.
SAM consistently reported an average of 93.7\% of expected 
results while Flink decreases from 83.7\% to 52.1\% as we increase to
the maximum size of the cluster.  Overall, SAM can obtain
rates of 91.8 billion netflows per day.
\end{abstract}

\section{Introduction}
Subgraph isomorphism, the decision problem of determining for a pair of graphs, $H$ and $G$, if $G$ contains a subgraph that
is isomorphic to $H$, is known to be NP-complete.  Additionally, the problem of finding all matching 
subgraphs within a larger graph is 
NP-hard \cite{Lee:2012:ICS:2448936.2448946}.  However, when we 
consider a streaming environment with bounds on the 
temporal extent of the subgraph, the problem
becomes polynomial.  We introduce some definitions to lay 
the groundwork for discussing complexity.

\vspace{-2mm}
\begin{definition}{Graph:} $G = (V, E)$ is a graph where $V$ are vertices 
and $E$ are edges.  Edges $e \in E$ are tuples of the form $(u, v)$ where $u, v \in V$.
\end{definition}

\vspace{-2mm}
\begin{definition}{Temporal Graph:}
$G = (V, E)$ is a \emph{temporal graph} where $V$ are vertices 
and $E$ are temporal edges.  
Temporal edges $e \in E$ are tuples of the form $(u, v, t, \delta)$
where $u,v \in V$ and $t, \delta \in \mathbb{R}$ and represent the 
start time of the edge and its duration, respectively.
\end{definition}

\vspace{-2mm}
\begin{definition}{Streaming Temporal Graph:}
A graph $G$ is a \emph{streaming temporal graph} when $G$ is a temporal graph
where $V$ is finite and $E$ is infinite.
\end{definition}

\vspace{-2mm}
To express subgraph queries against a temporal graph, we need the notion of
temporal contstaints on edges.  First we define some notation.
For an edge $e$, $starttime(e)$ returns the start time of the edge
while $endtime(e)$ returns when the edge ended, i.e. $t + \delta$.
We use the following simple grammar to discuss temporal constraints.
It is sufficient to understand the underlying concepts and examples, 
but could be expanded to enable greater expressibility.

\vspace{-3mm}
{
\scriptsize
\begin{align*}
<Constraint>  & := <EdgeExpr> <Comp> <EdgeExpr> | \\ 
                       &  <ArithmExpr> <Comp> <RealNumber> \\
<EdgeExpr> & := starttime(<EdgeId>) | endtime(<EdgeId>) \\
<ArithmExpr> & := <EdgeExpr> <Op> <EdgeExpr>\\ 
<Op> & := + | - \\
<Comp>  & := < | > | <= | >=
\end{align*}
}%

Now we define \emph{Temporal Subgraph Queries} using temporal
constraints.  

\begin{definition}{Temporal Subgraph Query:}
For a temporal graph $G$, a temporal subgraph query is composed of 
a graph $H$ with edges of the form $(u,v)$, where each of the
edges may have temporal constraints defined.  The result of the query
is all subgraphs $H' \in G$ that are isomorphic to $H$ and that fulfill the
defined temporal constraints.
\end{definition}

Below is an example using SPARQL-like \cite{standard:w3c:sparql} syntax
to express a triangular temporal subgraph query:
\begin{lstlisting}[caption={Temporal Triangle Query},label={listing:triangle_query}]
{
  x1 e1 x2;
  x2 e2 x3;
  x3 e3 x1;
  starttime(e3) - starttime(e1) <= 10;
  startime(e1) < starttime(e2);
  starttime(e2) < starttime(e3);
}
\end{lstlisting}

The first part of the query with the \emph{x} variables defines the 
subgraph structure $H$.  In this example, we have defined a triangle.
The second part of the query defines the temporal constraints.  
Edge \emph{e1} starts before
\emph{e2}, \emph{e2} starts before \emph{e3}, 
and the entire subgraph must occur within a 10 second window. 
For temporal queries where there is a maximum window specified, and also
a maximum number of edges, $d$, then
the problem of finding all matching subgraphs is polynomial.

\newtheorem{theorem}{Theorem}
\begin{theorem}
\label{polynomial}
For a streaming temporal graph $G$, let $n$ be the maximum number of edges
during a time duration of length $\Delta q$.  A temporal subgraph
query $q$ with $d$ edges and maximum temporal extent $\Delta q$ and computed on
an interval of edges of length $\Delta q$ has 
temporal and spatial complexity of $O(n^d)$.
\end{theorem}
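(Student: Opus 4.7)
The plan is to bound both time and space by counting the number of candidate $d$-edge subgraphs that could possibly satisfy the query within a single window of length $\Delta q$, and then argue that each candidate can be verified in time and space independent of $n$.

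First I would observe that since the query $q$ has maximum temporal extent $\Delta q$, any matching subgraph $H'$ must have all of its $d$ edges contained in some time interval of length $\Delta q$. By hypothesis, we are computing on such an interval, and by the definition of $n$ this interval contains at most $n$ temporal edges. So the search space for $H'$ is entirely determined by these (at most) $n$ edges; any edge outside the window is irrelevant and need not be examined.

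Next I would enumerate candidate matches by choosing which $d$ edges of the window play the roles of the $d$ edges of $H$. Since $d$ is a constant fixed by the query, the number of ordered $d$-tuples drawn from $n$ edges is at most $n^d$, and each unordered choice can be inspected in $O(1)$ time (relative to $n$) to check both the structural isomorphism of the induced subgraph with $H$ and the temporal constraints given by the grammar, which involve only additions and comparisons on the start and end times of the $d$ edges. Multiplying the $n^d$ candidates by the $O(1)$-per-candidate verification gives the $O(n^d)$ time bound. For the space bound, we need to store the $n$ edges of the window (which is $O(n)$) together with the enumerated candidates, and the latter dominates at $O(n^d)$.

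The one subtlety I would be most careful about is making precise the claim that verification is constant time in $n$: this requires that the constraint-checking cost depends only on $d$ (the number of edges and the length of the constraint list of the query), not on $n$, and that looking up $starttime$ and $endtime$ of an edge is $O(1)$. Given the simple grammar defined earlier, each constraint is a comparison between two arithmetic expressions over a bounded number of edge times, so this follows immediately and the bound $O(n^d)$ holds for both time and space.
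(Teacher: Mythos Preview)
Your proposal is correct and follows essentially the same approach as the paper: bound the number of candidate matches by the $n^d$ ordered $d$-tuples of edges available in a window of length $\Delta q$, and observe that each candidate can be checked (and stored) in time independent of $n$. The paper builds the $n^d$ count iteratively, edge by edge, whereas you state it directly and are more explicit about the $O(1)$ verification cost, but the argument is the same.
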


\begin{proof}
For an interval of length $\Delta q$, there are at most $n$ edges.  
Assume each of those $n$ edges satisfy one edge of $q$.  
Then there are at most $n$ potential matching subgraphs in the interval.  
Now assume for each of those $n$ potential matching subgraphs,
each of the $n$ edges satisfies another of the edges in $q$.  
Then there are $n^2$ potential matching subgraphs.  If we continue 
for $d$ iterations, we have a max of $n^d$ matching subgraphs.
Thus to find and store all the matching subgraphs, it would require
$O(n^d)$ operations and $O(n^d)$ space to store the results. 
\end{proof}

The point of Theorem \ref{polynomial} is to show that subgraph 
matching on streaming data is tractable,
and that over any window of time that is about the length of the 
temporal subgraph query, computing over
that window has polynomial complexity, as long as the 
number of edges in the query have some maximum value.
Our motivation in exploring subgraph matching on streaming data is that 
it fits the category of finding malicious behavior within network traffic.  
In our experience, cyber subgraphs of interest are generally 
small in size, i.e. $d < 7$, and so the problem can be safely considered
polynomial.

Network traffic can be thought of as streaming temporal graph where the 
vertices are IP addresses and the edges are communications between endpoints.  
For example, a watering hole attack can be thought of as subgraph pattern.
In a watering hole attack, an attacker infects websites 
that are frequented by the targeted community.
When someone from the targeted community accesses the website, their
machine becomes infected and begins communicating with a machine 
controlled by the attacker. 
Figure \ref{figure:wateringhole} illustrates the attack.

\begin{figure}
\centering
\includegraphics[width=.7\linewidth]{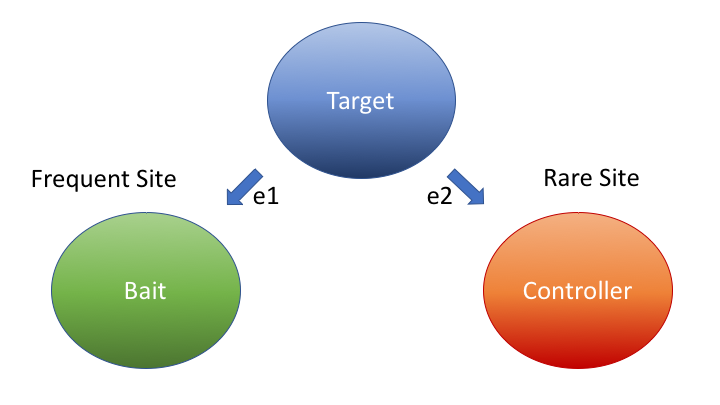}
\caption{Illustration showing the subgraph nature of a watering hole attack.  
The target machine accesses a popular site, \emph{Bait}.  
Shortly after accessing \emph{Bait}, \emph{Target} begins communicating 
to a new machine, the \emph{Controller}}
\label{figure:wateringhole}
\end{figure}
 
Listing \ref{listing:watering_hole} presents the watering hole query. 
This query has both temporal constraints on the edges and
further constraints defined on the vertices, 
namely that \emph{bait} is in \emph{Top1000}, i.e. a popular site, 
and \emph{controller} is not in \emph{Top1000}, i.e. 
a rare or never before seen site.  We use polylogarithmic streaming operators 
to create features for each node.  However, these node-based computations
are outside the scope of this paper, and presented in 
previous work \cite{sal-2019}.  

{
\scriptsize   
\begin{lstlisting}[caption={Watering Hole Query},label={listing:watering_hole}]
{ target e1 bait;
  target e2 controller;
  starttime(e2) > endtime(e1);
  starttime(e2) - endtime(e1) < 20;
  bait in Top1000;
  controller not in Top1000; }
\end{lstlisting}
}

We benchmark two approaches for finding 
temporal subgraphs within a streaming graph.  
One approach we built from the ground up.
We have a domain specific language (DSL) that we call the 
Streaming Analytics Language (SAL), where 
one can express temporal subgraph queries in a 
SPARQL-like\cite{standard:w3c:sparql} 
syntax with temporal constraints on the edges.  
We convert SAL into C++ using the
Scala Parser Combinator \cite{scala-parser-combinator}, which  
allows you to express a grammar of one 
language which is then translated into another language.
The C++ code uses a library we developed, namely the 
Streaming Analytics Machine (SAM).  SAM is a parallel library for 
operating on streaming data that can run in a distributed environment.

The other approach we used builds on Apache Flink \cite{apache-flink}, which 
is a distributed framework purpose built for streaming applications. 
Flink fits many of the needs we have for 
streaming computations and is a 
good basis for comparison.  While we present SAM as an implementation for SAL, 
nothing prevents other implementations of SAL to be developed.

To compare SAM and Flink, we focused on finding temporal triangles, i.e. the subgraph
query expressed in Listing \ref{listing:triangle_query}.  
Temporal triangles was chosen
because they are complex enough to test the frameworks' ability to 
perform difficult computations, and common enough that it could 
occur as a query or part of a query.

Our contributions in this work are the following:
\begin{itemize}
\item We present SAL as a way to quickly express
queries on streaming data.  In this work we focus on the subgraph matching
portion of the language, which is expressed 
using a SPARQL-like \cite{standard:w3c:sparql} syntax.
For the temporal triangle query of Listing \ref{listing:triangle_query}, 
the full SAL program requires 13 lines of code while an implementation in C++
using our custom library takes 238 lines of code and an Apache Flink 
implementation takes 228, representing a significant savings in
writing code.
\item We present an implementation of SAL, using a custom built library 
called the Streaming Analytics Machine, or SAM.  
We show scaling on the temporal triangle
problem to 128 nodes or 2560 cores, 
the maximum size of cluster we could reserve.
The maximum achieved rate is 91.8 billion netflows per day.
\item We compare SAM and Flink, each showing strength in a particular
regime.  SAM outperforms Flink when triangles are frequent.  
\end{itemize}

In the next section we discuss SAL as it pertains to subgraph matching.  In Section
\ref{section:sam}, we discus SAM, the implementation of SAL. 
We then discuss the other approach using Apache Flink in Section \ref{section:flink},
  which is then followed by Section \ref{section:results}, 
which compares SAM and Flink.  Section \ref{section:results2} takes a look at SAM
performance when only some subgraphs match other constraints of the query besides
just the subgraph portion.  Section \ref{section:related} examines related work.Section \ref{section:conclusions} concludes.

\section{Streaming Analytics Language}
\label{section:sal}
The Streaming Analytics Language is a combination of 
imperative statements used for feature extraction
and declarative, SPARQL-like \cite{standard:w3c:sparql} statements, used to
express subgraph queries.  Listing \ref{lst:example}
gives an example program, which is a complete
SAL query expressing a Watering Hole attack.

{

\lstset
{
  basicstyle=\scriptsize,
  xleftmargin=0.03\textwidth,
	numbers=left
}

\begin{lstlisting}[caption=SAL code example,label={lst:example},
                          escapechar=|]
//Preamble Statements
WindowSize = 1000 

// Connection Statements
Netflows = VastStream("localhost", 9999);

// Partition Statements
PARTITION Netflows By SourceIp, DestIp;
HASH SourceIp WITH IpHashFunction;
HASH DestIp WITH IpHashFunction; 

// Defining Features
Top1000 = FOREACH Netflows 
            GENERATE topk(DestIp, 100000, 
                         10000, 1000);

// Subgraph definition
Subgraph on Netflows with source(SourceIp) 
  and target(DestIp)
{
  target e1 bait;
  target e2 controller;
  starttime(e2) > endtime(e1); |\label{lst:example:tc1}|
  starttime(e2) - endtime(e1) < 20; |\label{lst:example:tc2}|
  bait in Top1000;                  |\label{lst:example:vc1}|
  controller not in Top1000;  |\label{lst:example:vc2}|
}	
\end{lstlisting}
}

In this SAL program there are five parts:
1) preamble statements,
2) partition statements,
3) connection statements,
4) feature definition, and
5) subgraph definition.
Preamble statements allow for global constants to be defined that 
are used throughout the program.
In the above listing, line 2 defines the default window size, i.e. the number of items in the sliding window.

After the preamble are the connection statements.
Line 5 defines a stream of netflows called \emph{Netflows}.  
\emph{VastStream} tells the SAL interpreter to expect netflow data of a 
particular format (we use the same format for netflows as found in the 
VAST Challenge 2013: Mini-Challenge 3 dataset \cite{VAST}).  Each
participating node in the cluster receives netflows over a socket on port 9999.
The \emph{VastStream} function creates a stream of tuples that 
represent netflows.  For the tuples generated by \emph{VastStream}, 
keywords are defined to access the individual fields of the tuple.

There are several different standard netflow formats.  SAL 
currently supports only one format, the one used by VAST (\cite{VAST}).  Adding other
netflow formats is a straightforward task.  In fact, SAL can easily be 
extended to process any type of tuple.  Using our current SAL interpreter, 
SAM, the process is to define a C++
std::tuple with the required fields and a function object that accepts 
a string and returns an std::tuple.  Once a mapping is defined from the 
desired keyword (e.g. \emph{VastStream}) to the std::tuple, this new 
tuple type can then be used in SAL connection statements.  The mapping
is defined via the Scala Parser Combinator \cite{scala-parser-combinator} 
and is discussed in more detail in Section \ref{section:sam}.

Following the connection statements is the definition of 
how the tuples are partitioned across the cluster.  Line 8 
specifies that the netflows should be partitioned separately 
by SourceIp and DestIp.  Each node in the cluster
acts as an independent sensor and receives a separate stream of netflows.
These independent streams are then re-partitioned across the cluster.  
In this example, each node is assigned
a set of Source IP addresses and Destination IP addresses.  
How the IP addresses are assigned is using a common hash function.
The hash function used is specified on lines 9 and 10, 
called the \emph{IpHashFunction}. This is another avenue for 
extending SAL.  Other hash functions can be defined
and mapped to SAL constructs, similar to how other tuple
types can be added to SAL.
The process is to define a function object that accepts the tuple type and 
returns an integer, and then map the function object to a keyword 
using the Scala Parser Combinator.

The next part of the SAL program defines features.  On line 13 we use the 
\emph{FOREACH GENERATE} statement to calculate the most 
frequent destination IPs across the stream of netflows.  The
second parameter defines the total window size of the sliding window.
The third parameter defines the size of a basic window 
\cite{topk-sliding-window-2003}, which divides up the sliding
window into smaller chunks.  The fourth parameter defines the number
of most frequent items to keep track of.  

Finally, lines 21 and 22 define the subgraph of interest
in terms of declarative, SPARQL-like statements \cite{standard:w3c:sparql}
that have the form $<$\emph{node}$>$ $<$\emph{edge}$>$ $<$\emph{node}$>$.  
Lines \ref{lst:example:tc1} and \ref{lst:example:tc2} 
define temporal constraints on the edges, specifying that \emph{e1}
comes before \emph{e2} and that from the end of \emph{e1}
to the start of \emph{e2}, the total time is twenty seconds.
Lines \ref{lst:example:vc1} and \ref{lst:example:vc2} define
constraints that the vertices must fulfill, in this case \emph{bait}
must be found in the feature set defined earlier, \emph{Top1000},
while \emph{controller} should not.

In this paper we focus on the subgraph queries, the specification
and implementation which will be discussed in the next section.

\section{Streaming Analytics Machine}
\label{section:sam}

The Streaming Analytics Machine (SAM) is one implementation of
SAL.  We use the Scala Parser Combinator Library \cite{scala-parser-combinator}
to translate SAL code into C++ code which utilizes a parallel library we
developed for expressing distributed streaming programs.

SAM is architected so that each node in the cluster receives tuple data 
Right now for the prototype, the only ingest method is a simple socket layer.  
In maturing SAM, other options  such as Kafka \cite{Garg:2013:AK:2588385} 
is an obvious alternative.  We then use ZeroMQ \cite{Akgul:2013:ZER:2523409} 
to distribute the tuples across the cluster.

For each tuple that a node receives, it performs a hash 
for each key specified in the \emph{PARTITION} statement, and sends the
tuple to the node assigned that key.  For our initial domain of cyber analysis, 
it makes sense to partition netflows by IP address, as many types of analyses 
perform calculations on a per IP basis.  
For each netflow that a node receives, it 
performs a hash of the source IP in the netflow mod the cluster 
size to determine where to send the netflow.  Similarly, for the 
destination IP. 
Thus for each netflow a node receives over the socket layer, 
it sends the same netflow twice
over ZeroMQ.  Each node is then prepared to perform 
calculations on a per IP basis.


For this work, we will focus on the data structures, processes, and algorithms
that enable a subgraph query, such as the one in Listing \ref{listing:triangle_query},
to be expressed in SAM and computed on a stream.
First, the subgraph query is translated 
from the SPARQL-like syntax to
a sequence of C++ objects, namely of type \emph{EdgeExpression}, 
\emph{TimeEdgeExpression}, and \emph{VertexConstraintExpression}.  
For example, the first three lines of Listing \ref{listing:triangle_query}
become EdgeExpressions, in essence, assigning labels
to vertices and edges.  The last three lines of 
Listing \ref{listing:triangle_query} become
TimeEdgeExpressions, utilizing the labels specified in
the EdgeExpressions to define constraints.
VertexConstraintExpression's are constraints on the
vertex itself, and an example can be found in lines
\ref{lst:example:vc1} and \ref{lst:example:vc2} of Listing \ref{lst:example}.
SAM uses the EdgeExpressions and TimeEdgeExpressions
to develop an order of execution based upon the temporal ordering
of the edges.  In the example of Listing \ref{listing:triangle_query},
$e1$ occurs before $e2$, which occurs before $e3$.

The approach we have taken assumes there is a strict temporal
ordering of the edges.  For our intended use case of cyber security, this is generally
an acceptable assumption.  Most queries have causality 
implicit in the question.  For example, the watering hole attack is a sequence
of actions, with the initial cause the malvertisement exploiting a vulnerability
in a target system, which then causes other actions, all of which
are temporally ordered.  As another example, botnets with a command and 
control server, commands are issued from the command and control server,
which then produces subsequent actions by the bots, which again produces
a temporal ordering of the interactions.  Our argument is that requiring
strict temporal ordering of edges is not a huge constraint, or at the very least
covers many relevant questions an analyst might ask of cyber data.
We leave as future work an implementation that can address queries
that do not have a temporal ordering on the edges.
It should be noted that while SAM currently requires temporal ordering of edges,
SAL is not so constrained.  SAM can certainly be expanded to include other
approaches that do not require temporal ordering of edges. 

Once the subgraph query has been finalized with the order of edges 
determined, SAM is now set to find matching subgraphs within
the stream of data.  Before discussing the actual algorithm, we will first
discuss the data structures.

\subsection{Data Structures}

We make use of a number of thread-safe custom data structures, namely
\emph{Compressed Sparse Row} (CSR), 
\emph{Compressed Sparse Column} (CSC), 
\emph{ZeroMQ Push Pull Communicator} (Communicator),
\emph{Subgraph Query Result Map} (ResultMap),
\emph{Edge Request Map} (RequestMap), and
\emph{Graph Store} (GraphStore).

\subsubsection{Compressed Sparse Row}
CSR is common way of representing
sparse matrices. 
The data structure is useful for representing sparse graphs
since graphs can be represented with a matrix, the rows
signifying source vertices and the columns destination vertices.
Instead of a $|V| \times |V|$ matrix, where $|V|$ is the number of vertices,
CSR's have space requirement $O(|E|)$, where $|E|$ is the number of edges.
In the case of sparse graphs, $|E| \ll |V| \times |V|$.

\begin{figure*}
\centering
\begin{minipage}{.30\textwidth}
\centering
\includegraphics[width=.95\linewidth]{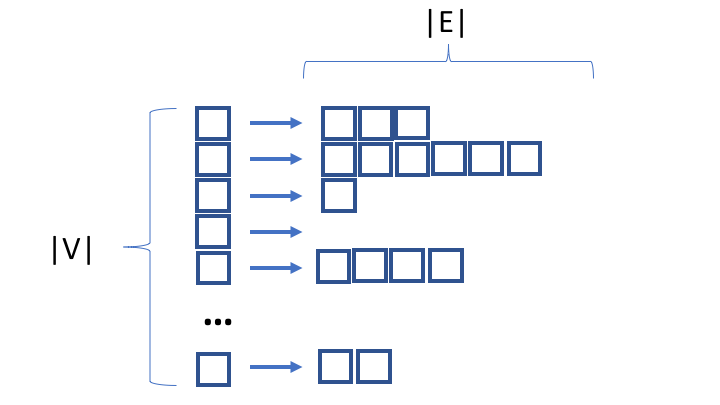}
\captionof{figure}{Traditional Compressed Sparse Row (CSR) data structure.}
\label{figure:csr}
\end{minipage}
\hspace{.01\textwidth}
\begin{minipage}{.30\textwidth}
\centering
\includegraphics[width=.95\linewidth]{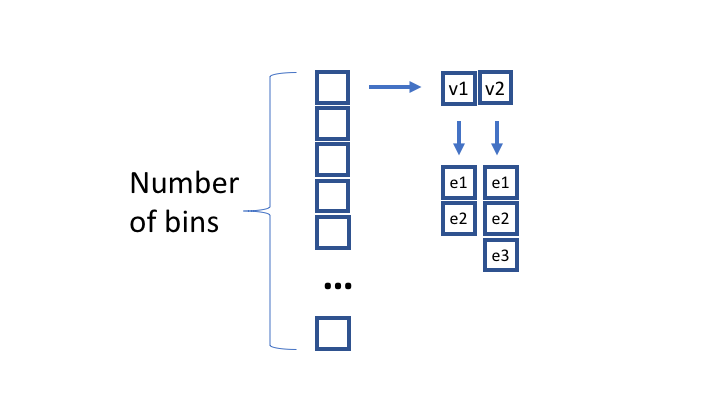}
\captionof{figure}{Modified Compressed Sparse Row (CSR) data structure.}
\label{figure:csr2}
\end{minipage}
\hspace{.01\textwidth}
\centering
\begin{minipage}{.30\textwidth}
\includegraphics[width=.95\linewidth]{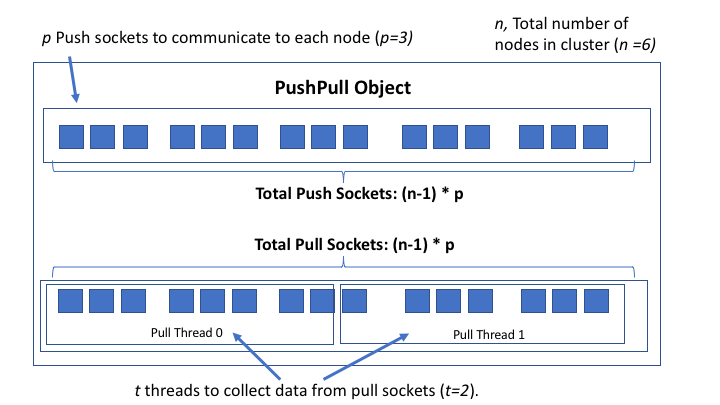}
\caption{The push pull object.}
\label{figure:push_pull_object}
\end{minipage}
\end{figure*}

Figure \ref{figure:csr} presents the traditional CSR data structure.  There
is an array of size $|V|$ where each element of the array points
to a list of edges.  For array element $v$, the list of edges are all the edges
that have $v$ as the source.  With this index structure, we can easily find
all the edges that have a particular vertex $v$ as the source.  However,
a complete scan of the data structure is required to find all edges that have $v$ 
as a destination, leading to the need for the compressed sparse column
data structure, described in the next section.

CSR's as presented work well with static data where the number of vertices
and edges remain constant.  However, our work requires edges that
expire, and also for the possibility that vertices may come and go.
To handle this situation, we create an array of lists of lists of edges, with each
array element protected with a mutex.  Figure \ref{figure:csr2} shows
the overall structure.  Instead of an array of size $|V|$, we create
an array of bins that are accessed via a hash function.  When
an edge is consumed by SAM, it is added to this CSR data structure.
The source of the edge is hashed, and if the source has never been seen,
a list is added to the first level.  Then the edge is added to the list.
If the vertex has been seen before, the edge is added to the existing edge list.
Additionally, SAM keeps track of the longest duration of a registered
query, and any edges that are older than the current time minus the longest
duration are deleted.  Checks for old edges are done anytime
a new edge is added to an existing edge list.

The CSR has mutexes protecting each bin, which
will be important later when
we describe finding subgraph matches.  Regarding performance,
as each bin has its own mutex, the chance of thread contention
is very low.  In metrics gathering, waiting for mutex locks of the CSR
data structure has never been a significant factor in any of our experiments.

\subsubsection{Compressed Sparse Column}
The Compressed Sparse Column (CSC) is exactly the same as the CSR, the only difference
is we index on the target vertex instead of the source vertex.  Both the CSR and CSC are needed,
depending upon the query invoked.  We may need to find an edge with a certain source (CSR), or
an edge with a certain target (CSC).

\subsubsection{ZeroMQ PushPull Communicator}
\label{section:communicator}

\emph{ZeroMQ} handles communication between nodes 
using the push pull paradigm where 
producers push data to push sockets
and consumers pull from corresponding pull sockets.  
The \emph{ZeroMQ PushPull Communicator} (Communicator) 
sends messages to other nodes.  There are push 
sockets for each node in the cluster, except
for the node itself.
There are corresponding pull sockets that pull the data coming from other
nodes.
The Communicator allows for callback functions
to be registered.  These callback functions are invoked whenever a pull 
socket receives data.

ZeroMQ opens multiple ports in the dynamic range (49152
to 65535) for each ZMQ socket that is opened logically in the code.  
Nevertheless, we found
that contention using the ZMQ sockets created significant 
performance bottlenecks.  As such, we created multiple 
push sockets per node, and anytime a node needs to send data to a 
node, it randomly selects one of the push sockets to use.
Figure \ref{figure:push_pull_object} gives an overview
of the design.  If there are $n$ nodes in the cluster and $p$ 
push sockets per node, in total there are
$(n-1)p$ push sockets.

We also found that multiple threads were necessary to collect the data.
$t$ threads are dedicated to pulling data.
In our experiments, we found 4 push sockets per node and
16 total pull threads worked best for most situations.

\subsubsection{Subgraph Query Result Map}
The ResultMap stores intermediate query results with 
There are two central methods, \emph{add} and \emph{process}.  The method
\emph{add} takes as input an intermediate query result, creates additional 
query results based on the local CSR and CSC,
and then creates a list of edge requests for other edges on other nodes 
that are needed to complete queries.  The method \emph{process} is similar, except
in this case it takes an edge that has been received by the node and it checks to see
if that edge satisfies any existing intermediate subgraph queries.  The details
of processing intermediate results and edges used will be discussed in greater 
depth in Section \ref{section:algorithm}.

\begin{figure}
\centering
\includegraphics[width=.99\linewidth]{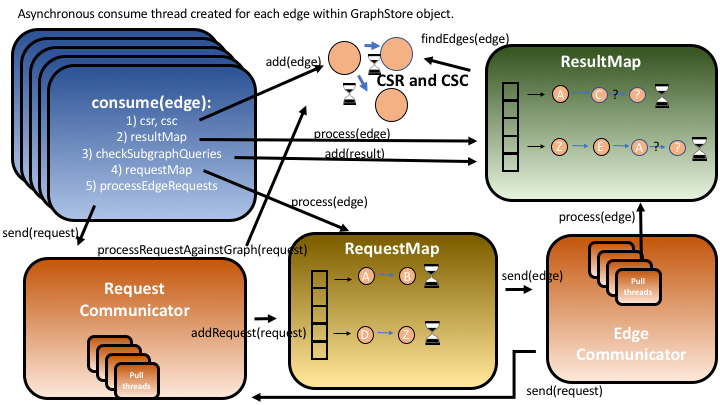}
\caption{An overview of the algorithm employed to detect subgraphs of interest.
Asynchronous threads are launched to consume each edge.  The consume threads
then performs five different operations.  Another set of threads are pulling requests
from other nodes for edges within the RequestCommunicator.  Finally, one more
set of threads pull edges from other nodes in reply to edge requests.}
\label{figure:algorithm}
\end{figure} 

The ResultMap uses a hash structure to store
results.  It is an array of vectors of intermediate query results.  The array is of fixed size, so
must be set appropriately large to deal with the number of results generated during
the max time window specified for all registered queries, otherwise excessive time
is spent linearly probing the vector of intermediate results.  
There are mutexes protecting each access to each array slot.  For each intermediate 
result created, it is indexed by the source, the target, 
or a combination of the source and target, depending on what the query needs
to satisfy the next edge.  The index is the result of a hash function, and that index
is used to store the intermediate result within the array data structure.

\subsubsection{Edge Request Map}
The RequestMap is the object that receives 
requests for edges that match certain criteria, and sends out 
edges to the requestors whenever an
edge is found that matches the criteria.  There are two important methods to mention:
\emph{addRequest(EdgeRequest)} and \emph{process(Edge)}.  The
\emph{addRequest} method takes an \emph{EdgeRequest} object and stores it.
The \emph{process} method takes an \emph{Edge} and checks to see if 
there are any registered edge requests that match the given edge.

The main data structures is again a hash table.  The approach to hashing is similar to
the ResultMap, as it depends on whether the \emph{source}
is defined in the request data structure, the \emph{target} is defined, or both.  
The index is then used to find a bin with an array of lists of EdgeRequest objects.
Similar to the ResultMap, whenever \emph{process} is called,
edge requests that have expired are removed from the list.

\subsubsection{Graph Store}
The GraphStore object uses all the previous data 
structures, orchestrating them together to perform subgraph queries.  
It has a pointer to the ResultMap
and to the RequestMap.  It also has pointers to two Communicators, 
one for sending/receiving edges to/from 
other nodes (Edge Communicator), and another
for sending/receiving edge requests (Request Communicator).  
GraphStore maintains the CSR and CSC
data structures.  How all these pieces work together is described in the next section.   
 
\vspace{10mm} 
\subsection{Algorithm}
\label{section:algorithm}

The algorithm employed to find matching subgraphs employs three sets of threads:
1) GraphStore consume threads, 
2) Request Communicator threads, and
3) Edge Communicator threads.

\subsubsection{GraphStore Consume(Edge) threads}
\label{section:graphstore_threads}
The GraphStore object's primary method is the \emph{consume}
function.  GraphStore is tied into a producer, and any
edges that the producer generates is fed to the \emph{consume}
function.  For each \emph{consume} call, an asynchronous thread is 
launched.  We found that occasionally a call to \emph{consume}
would take an exorbitant amount of time, orders of magnitude greater
than the average time.  These outliers were a result of delays in the
underlying ZeroMQ communication infrastructure, proving difficult to
entirely prevent.  As such, we added the asynchronous approach,
effectively hiding these outliers by allowing work to continue even if
one thread was blocked by ZeroMQ.

Each consume thread performs the following actions:
1) Adds the edge to the CSR and CSC data structures.
2) Processes the edge against the ResultMap,
looking to see if any intermediate results can be further developed
with the new edge.
3) Checks to see if this edge satisfies the first edge of any registered
queries.  If so, adds the results to the ResultMap.
4) Processes the edge against the RequestMap to
see if there are any outstanding requests that match with the new edge.
5) Steps two and three can result
in new edges being needed that may reside on other nodes.  These
edge requests are accumulated and sent out.

For steps 2-4, we provide greater detail below:

\textbf{Step 2: ResultMap.process(edge)}:
ResultMap's \emph{process(edge)} function
is outlined in Algorithm \ref{algorithm:result_map_process}.  At the top level,
the \emph{process(edge)} dives down into three
\emph{process(edge, indexer, checker)} calls.
Intermediate query results are indexed by the next edge to be matched.
The \emph{srcIndexFunc}, \emph{trgIndexFunc}, \emph{bothCheckFunc}
are lambda functions that perform a hash function against the appropriate
field of the edge to find where relevant intermediate results are stored
within the hash table of the ResultMap.
The $srcCheckFunc$ is a lambda function that takes as input a
query result.  It checks that the query result's next edge has a bound source
variable and an unbound target variable.  Similarly, $trgCheckFunc$ returns
true if the query result's next edge has an unbound source variable
and bound target variable.  The $bothCheckFunc$ returns true if both 
variables are bound.  These lambda functions are passed
to the \emph{process(edge, indexer, checker)} function, where the logic
is the same for all three cases by using the lambda functions.

The \emph{process(edge, indexer, checker)} function (line \ref{rmp:process2}, 
uses the provided \emph{indexer} to find the location (line \ref{rmp:indexer}) 
of the potentially relevant intermediate results in the hash 
table (class member \emph{alr}).  Line \ref{rmp:time} finds the time
of the edge.  We use this on line \ref{rmp:delete} to judge whether an 
intermediate result has expired and can be deleted.
On line \ref{rmp:for_all_results} we iterate through all the intermediate
results found in the bin $alr[index]$.  We use the $checker$ lambda 
function to make sure the intermediate result is of the expected
form, and if so, we try to add an edge to the result.  The
$addEdge$ function  (line \ref{rmp:addedge}) adds an edge to the
intermediate result if it can, and if so, returns 
a new intermediate result but
leaves the pre-existing result unchanged.  This allows the 
pre-existing intermediate result to match with other edges later.
On line \ref{rmp:pushback} we add the new result to the list, $gen$,
of newly generated intermediate results.  On line
\ref{rmp:process_against_graph} we pass $gen$
to \emph{processAgainstGraph}.

The \emph{processAgainstGraph} function is used to see if there
are existing edges in the CSR or CSC that can be used
to further complete the queries.  The overall approach
is to generate frontiers of modified results, and continue
processing the frontier until no new intermediate results
are created.  Line \ref{rmp:frontier_start} sets the
\emph{frontier} to be the beginning of \emph{gen}'s 
iterator.  Then we push back a null element onto \emph{gen}
on line \ref{rmp:frontier_null} to mark the end of the current
frontier.  Line \ref{rmp:while1} continues iterating until
there are no longer any new results generated.  Line
\ref{rmp:while2} iterates over the frontier until a null
element is reach.  Lines \ref{rmp:csr} and \ref{rmp:csc}
find edges from the graph that match the provided
intermediate result on the frontier.  Then on line
\ref{rmp:forall_found_edges} we try to add the edges
to the intermediate result, creating new results
that are added to the new frontier on line \ref{rmp:newR}.
Eventually the updated list of intermediate results is returned
to the \emph{process(edge, indexer, checker)} function.
This list is then added to the hash table on line \ref{rmp:add_nocheck}.
The function \emph{add\_nocheck} is a private method that differs
from the public \emph{add} in that it doesn't check the CSR
and CSC, since that step has already been taken.  It also
generates edge requests for edges that will be found on other
nodes, and that list is returned on line \ref{rmp:return_requests2}
and then in the overarching \emph{process} function on line
\ref{rmp:return_requests1}.

{

\begin{algorithm}
\scriptsize
\caption{ResultMap: Processing a New Edge}
\label{algorithm:result_map_process}
\begin{algorithmic}[1]
\Function{ResultMap.process}{$edge$}  \label{rmp:process1}
  \State $requests$, a list of edge requests.
  \State $requests \pluseq process(edge, srcIndexFunc$
  \State \hspace{35 mm}$srcCheckFunc)$ 
  \State $requests \pluseq process(edge, trgIndexFunc,$
  \State \hspace{35 mm}$trgCheckFunc)$
  \State $requests \pluseq process(edge, bothIndexFunc,$
  \State \hspace{35 mm}$bothCheckFunc)$
  \State\Return $requests$ \label{rmp:return_requests1}  
\EndFunction
\State
\Function{ResultMap.process}{$edge$, $indexer$, $checker$} \label{rmp:process2}
  \State $gen$, a list of intermediate results generated.
  \State $index \gets indexer(edge)$ \label{rmp:indexer} 
  \State $currentTime \gets getTime(edge)$ \label{rmp:time}
  \State $mutexes[index].lock()$
  \State $requests$, a list of edge requests.
  \ForAll {$r \in alr[index]$} \label{rmp:for_all_results}
    \If {$r.isExpired(currentTime)$} 
      \State $erase(r)$  \label{rmp:delete}
    \Else
      \If {$checker(r)$}
        \State $newR \gets r.addEdge(edge)$ \label{rmp:addedge}
        \If {$newR$ $not$ $null$}
          \State $gen.push\_back(newR)$ \label{rmp:pushback}
        \EndIf 
      \EndIf
    \EndIf
  \EndFor
  \State $mutexes[index].unlock()$
  \State $gen \gets processAgainstGraph(gen)$ \label{rmp:process_against_graph}
  \ForAll {$g \in gen$}
    \State $requests \pluseq add\_nocheck(g)$  \label{rmp:add_nocheck}
  \EndFor
  \State\Return $requests$ \label{rmp:return_requests2}
\EndFunction
\State
\Function{ResultMap.processAgainstGraph}{$gen$}
\State $frontier \gets gen.begin()$ \label{rmp:frontier_start}
\State $gen.push\_back(null)$ \label{rmp:frontier_null}
\While{$frontier \neq gen.end()$ } \label{rmp:while1}
  \While{$frontier \neq null$}          \label{rmp:while2}
    \If {$!frontier.complete()$}
      \State \texttt{++}$frontier$
      \State $foundEdges = csr.findEdges(frontier)$\label{rmp:csr}
      \State $foundEdges \pluseq csc.findEdges(frontier)$\label{rmp:csc}
      \ForAll {$edge \in foundEdges$}  \label{rmp:forall_found_edges}
        \State $newR \gets frontier.addEdge(edge)$
        \If {$newR$ $not$ $null$}
          \State $gen.push\_back(newR)$ \label{rmp:newR}
        \EndIf
      \EndFor
     \EndIf
  \EndWhile
  \State $frontier \gets frontier.erase()$
  \State $gen.push\_back(null)$
\EndWhile
\State\Return $gen$
\EndFunction
\end{algorithmic}
\end{algorithm}

}

\textbf{Step 3: Checking registered queries}:
When a GraphStore object receives a new edge, it must check to see if
that edge satisfies the first edge of any registered queries.
The process is outlined in Algorithm \ref{algorithm:check_queries}.
The \emph{GraphStore.}
\emph{checkQueries(edge)} function is 
straightforward.  On line \ref{cq:forall_queries} we iterate
through all registered queries, checking to see if the edge
satisfies the first edge of the query (line \ref{cq:satisfies}),
and if so, adds a new intermediate result to the 
ResultMap (line \ref{cq:add}). 

The ResultMap's \emph{add} function begins on line
\ref{cq:rm_add}.  We may create intermediate results
where the next edge belongs to another node, so
we initialize a list, $requests$, to store those requests (line \ref{cq:list_requests}).
We need to check the CSR and CSC if the new result
can be extended further by local knowledge of the graph.
As such we make a call to \emph{processAgainstGraph}
on line \ref{cq:process_against_graph}.
On line \ref{cq:all_results} we iterate through all
generated intermediate results.  If the result is not complete,
we calculate an index (line \ref{cq:hash}), and
place it within the array of lists of results (line \ref{cq:push_back})
with mutexes to protect access.  If the result is complete,
we send it to the output destination (line \ref{cq:output}).
At the very end we return a list of requests, which is then
aggregated by GraphStore's \emph{checkQueries} function,
and then returned on line \ref{cq:return} and eventually used 
by step 5 to send out the requests to other nodes.

\begin{algorithm}
\scriptsize
\caption{Checking Registered Queries}
\label{algorithm:check_queries}
\begin{algorithmic}[1]
\Function{GraphStore.checkQueries}{$edge$}  \label{cq:check_queries}
  \State $requests$, a list of edge requests. 
  \ForAll {$q \in queries$} \label{cq:forall_queries}
    \If {$q.satisfiedBy(edge)$} \label{cq:satisfies}
      \State $requests \pluseq resultMap.add(Result(edge))$\label{cq:add}
    \EndIf
  \EndFor
  \State\Return $requests$ \label{cq:return}
\EndFunction
\State
\Function{RequestMap.add}{$result$} \label{cq:rm_add}
  \State $requests$, a list of edge requests. \label{cq:list_requests}
  \State $gen$, a list of intermediate results generated.
  \State $gen.push\_back(result)$
  \State $gen \gets processAgainstGraph(gen)$\label{cq:process_against_graph}
  \ForAll {$g \in gen$} \label{cq:all_results}
    \If {$!g.complete()$} \label{cq:complete}
      \State $index = g.hash()$ \label{cq:hash}
      \State $mutexes[index].lock()$
      \State $alr[index].push\_back(g)$ \label{cq:push_back}
      \State $mutexes[index].unlock()$
    \Else
      \State $output(g)$ \label{cq:output}
    \EndIf
  \EndFor
  \State\Return $requests$
\EndFunction
\end{algorithmic}
\end{algorithm}

\textbf{Step 4: RequestMap.process(edge)}
The purpose of RequestMap's \emph{process(edge)} is to take a new
edge and find all open edge requests that match that edge.  The process
is detailed in Algorithm \ref{algorithm:request_map_process}.
The opening logic of the \emph{RequestMap.process(edge)} function is similar to
that of ResultMap's \emph{process} function.  There are three
lambda functions for indexing into RequestMap's hash table structure,
which is an array of lists of EdgeRequests ($ale$).   There are also three 
lambda functions for checking that an edge matches a request.  
The indexer and the checker for each of the three cases are each
passed to the \emph{process(edge, indexer, checker)} function
on lines \ref{emp:details1} - \ref{emp:details3}.  
The \emph{process(edge, indexer, checker)} function 
iterates through all of the edge requests with the same index
as the current edge (line \ref{emp:for_all_requests}), 
deleteing all that have expired (line \ref{emp:delete}).
If the edge matches the request, the edge is then sent
to another node using the EdgeCommunicator (line \ref{emp:send}).

{

\begin{algorithm}
\scriptsize
\caption{RequestMap: Processing a New Edge}
\label{algorithm:request_map_process}
\begin{algorithmic}[1]
\Function{RequestMap.process}{$edge$}  \label{emp:process1}
  \State $process(edge, srcIndexFunc, srcCheckFunc)$ \label{emp:details1}
  \State $process(edge, trgIndexFunc,trgCheckFunc)$
  \State $process(edge, bothIndexFunc,bothCheckFunc)$  \label{emp:details3}
\EndFunction
\State
\Function{RequestMap.process}{$edge$, $indexer$, $checker$} \label{emp:process2}
  \State $index \gets indexer(edge)$ \label{emp:indexer} 
  \State $currentTime \gets getTime(edge)$ \label{emp:time}
  \State $mutexes[index].lock()$
  \ForAll {$e \in ale[index]$} \label{emp:for_all_requests}
    \If {$e.isExpired(currentTime)$} 
      \State $erase(e)$  \label{emp:delete}
    \Else
      \If {$checker(e, edge)$}
        \State $edgeCommunicator.send(edge)$ \label{emp:send}
      \EndIf
    \EndIf
  \EndFor
  \State $mutexes[index].unlock()$
\EndFunction
\end{algorithmic}
\end{algorithm}

}

\subsubsection{RequestCommunicator Threads}
Another group of threads are the pull threads of the
RequestCommunicator.  These threads pull from
the ZMQ pull sockets dedicated to gathering
edge requests from other nodes.  One callback
is registered, the \emph{requestCallback}.  The
\emph{requestCallback} calls two methods:
\emph{addRequest} and \emph{processRequestAgainstGraph}.
The \emph{addRequest} method registers the request
with the RequestMap.  The method
\emph{processRequestAgainstGraph} checks to see if
any existing edges are already stored locally that
match the edge request.

\subsubsection{EdgeCommunicator Threads}
The last group of threads to discuss are pull threads
of the EdgeCommunicator and its callback function,
\emph{edgeCallback}.  Upon receiving an edge through
the callback, \emph{ResultMap.process(edge)} is called,
which is discussed in detail in section \ref{section:graphstore_threads}.
The method \emph{process(edge)} produces edge requests,
which are then sent via the RequestCommunicator, the same
as Step 5 of \emph{GraphStore.consume(edge)}.

\section{Apache Flink}
\label{section:flink}
We have currently mapped SAL into SAM as the implementation.  SAL is converted
into SAM code using the Scala Parser Combinator \cite{scala-parser-combinator}.  
However, we wanted to compare how another framework would perform.
It is certainly possible to map SAL into other languages and frameworks.
A prime candidate for evaluation is Apache Flink \cite{apache-flink}, which was 
custom designed and built with streaming applications in mind.  In this section,
we outline the approach taken for implementing the triangle query of Listing 
\ref{listing:triangle_query} using Apache Flink.  Once the groundwork has been 
laid, the actual specification of the algorithm is relatively succinct.  The java code
using Flink 1.6 is presented in Listing \ref{listing:flink}.

{
\lstset
{
  basicstyle=\scriptsize,
  xleftmargin=0.03\textwidth,
  language=java,
  numbers=left
}

\begin{lstlisting}[caption={Apache Flink Triangle Implementation},
                                         label={listing:flink},
                                         escapechar=|]
final StreamExecutionEnvironment env = |\label{flink:env_beg}|
  StreamExecutionEnvironment 
  .getExecutionEnvironment();
env.setParallelism(numSources); |\label{flink:env_num_sources}|
env.setStreamTimeCharacteristic(
  TimeCharacteristic.EventTime);   |\label{flink:env_end}|

NetflowSource netflowSource = |\label{flink:datastream1}| 
  new NetflowSource(numEvents, numIps, rate);
DataStreamSource<Netflow> netflows = 
  env.addSource(netflowSource); |\label{flink:datastream2}|

DataStream<Triad> triads = netflows |\label{flink:triads1}|
  .keyBy(new DestKeySelector())
  .intervalJoin(netflows.keyBy( |\label{flink:interval_join1}|
     new SourceKeySelector()))
  .between(Time.milliseconds(0), 
    Time.milliseconds((long) queryWindow * 1000)) |\label{flink:time_window}|
  .process(new EdgeJoiner(queryWindow));|\label{flink:triads2}|
       
DataStream<Triangle> triangles = triads |\label{flink:triangles1}|
  .keyBy(new TriadKeySelector())
  .intervalJoin(netflows.keyBy(
    new LastEdgeKeySelector()))
  .between(Time.milliseconds(0), 
    Time.milliseconds((long) queryWindow * 1000))
   .process(new TriadJoiner(queryWindow)); |\label{flink:triangles2}|

triangles.writeAsText(outputFile, 
  FileSystem.WriteMode.OVERWRITE);
\end{lstlisting}

}

Lines \ref{flink:env_beg} - \ref{flink:env_end}| define the streaming
environment, such as the number of parallel sources (generally the number
of nodes in the cluster) and that event time of the netflow
should be used as the timestamp (as opposed to ingestion time
or processing time).  Lines \ref{flink:datastream1} - \ref{flink:datastream2}
create a stream of netflows.  \emph{NetflowSource} is a custom
class that generates the netflows in the same manner as was used
for the SAL/SAM experiments.  Lines \ref{flink:triads1} - \ref{flink:triads2}
finds triads, which are sets of two connected edges, that fulfill 
the temporal requirements.  The variable \emph{queryWindow}
on line \ref{flink:time_window} was set to be ten in our 
later experiments, corresponding to a temporal window of 10 seconds.

Line \ref{flink:interval_join1} uses 
an interval join.  Flink has different types of joins.  The
one that was most appropriate for our application is the interval
join.  It takes elements from two streams (in this case the same 
stream \emph{netflows}), and defines an interval of time
over which the join can occur.  In our case, the interval
starts with the timestamp of the first edge
and it ends \emph{queryWindow} seconds later.  
This join is a self-join: it merges the \emph{netflows} data stream
keyed by the destination IP with the \emph{netflows}
data stream keyed by the source IP.  In the end we have a stream
of pairs of edges with three vertices of the form 
$A \rightarrow B \rightarrow C$.

Once we have triads that fulfill the temporal constraints, we can
define the last piece: finding another edge that completes
the triangle, $C \rightarrow A$.  Lines \ref{flink:triangles1} - \ref{flink:triangles2}
performs another interval join, this time between the
triads and the original \emph{netflows} data streams.
The join is defined to occur such that the time difference
between the starting time of the last edge and the starting time
of the first edge must be less than \emph{queryWindow} seconds.

Overall, the complete code is 228 lines long.  This number includes
the code that would likely need to be generated were a mapping
betwen SAL and Flink completed.  This includes classes to do
key selection (\emph{SourceKeySelector}, \emph{DestKeySelector},
\emph{LastEdgeKeySelector}, \emph{TriadKeySelector}),
the \emph{Triad} and \emph{Triangle} classes, and classes to
perform joining (\emph{EdgeJoiner} and \emph{TriadJoiner}).
However, the code for the \emph{Netflow} class and the 
\emph{NetflowSource} were not included, as those classes
would likely be part of the library as they are with SAM. 
For comparison, the SAL version of this query is 13 lines long.

\section{Comparison: SAM vs Flink}
\label{section:results}

Here we compare the performance of SAM with the custom Flink implementation.  
In both cases, we run the triangle query of 
Listing \ref{listing:triangle_query}.
We examine weak scaling, wherein the number of elements per node is 
kept constant as we increase the number of nodes.  In all runs, 
each node is fed 2,500,000 netflows.   
For each netflow we randomly select the source and destination IPs 
from a pool of $|V|$ vertices for each netflow generated.  
We find the highest rate achievable by each approach by increasing 
the rate of edges production until the method to can't keep pace.    

For all of our experiments we use Cloudlab \cite{RicciEide:login14},
a set of clusters distributed across three sites, Utah, Wisconsin, and South Carolina, where 
researchers can provision a set of nodes to their specifications.  
In particular we made use of the Wisconsin cluster using node type \emph{c220g5},
which have two Intel Xeon Silver 4114 10-core CPUs at 2.20 GHz and 192 GB ECC DDR4-2666
memory per node.  We vary the number of nodes to be 16, 32, 64, and 120.  
In Section \ref{section:results2} we present scaling to 128 nodes for SAM,
but for this batch of runs it was difficult to get a reservation for a full 128 nodes.

\begin{figure*}[ht]
\centering
\begin{minipage}{.29\textwidth}
\includegraphics[width=.99\linewidth]{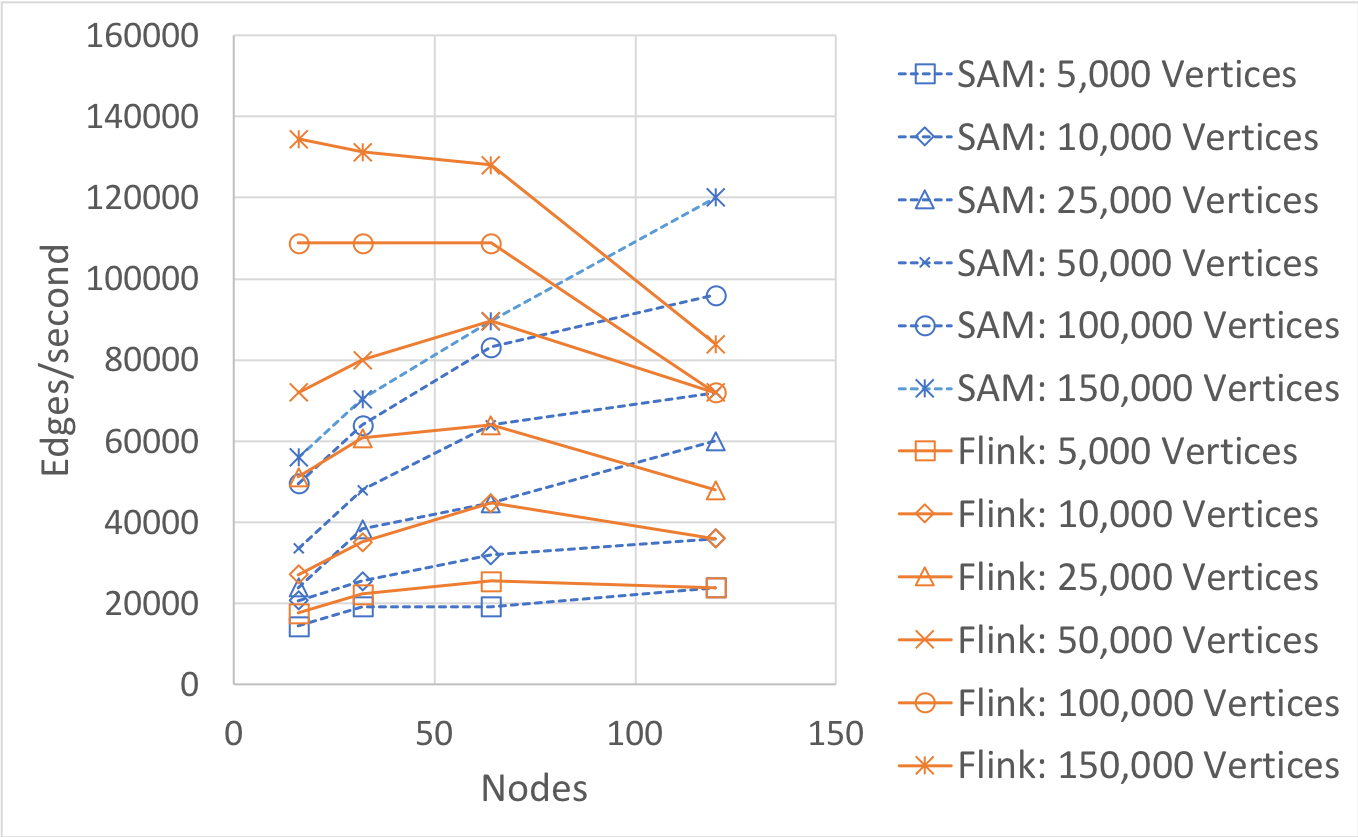}
\captionof{figure}{Shows the maximum throughput for SAM and Flink
where the latency is within 15 seconds of termination. }
\label{figure:max_aggregate_rate_lax}
\end{minipage}
\hspace{.04\textwidth}
\centering
\begin{minipage}{.29\textwidth}
\centering
\includegraphics[width=.99\linewidth]{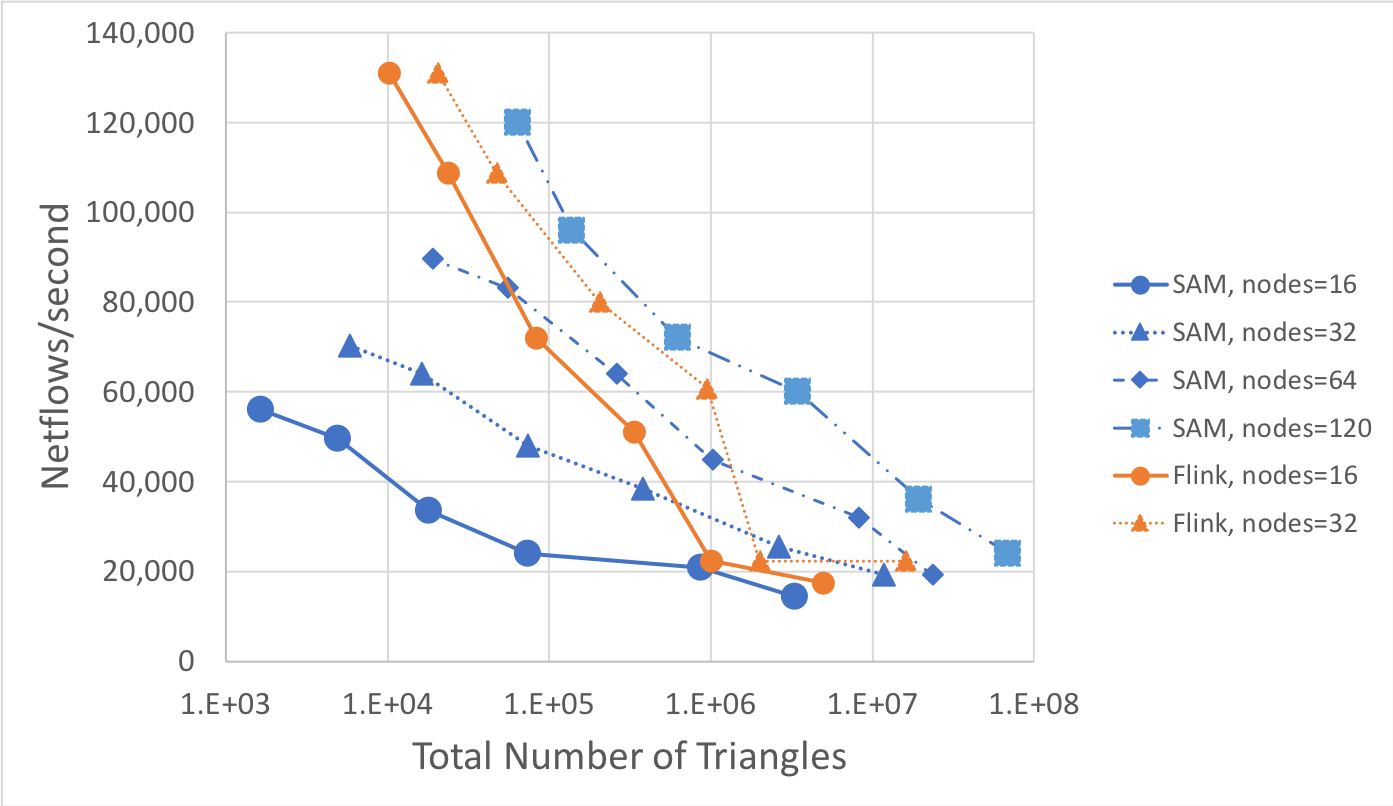}
\captionof{figure}{Presents throughput as a function of total number
of triangles produced.}
\label{figure:netflowVsTriangles}
\end{minipage}
\hspace{.04\textwidth}
\centering
\begin{minipage}{.29\textwidth}
\includegraphics[width=.8\linewidth]{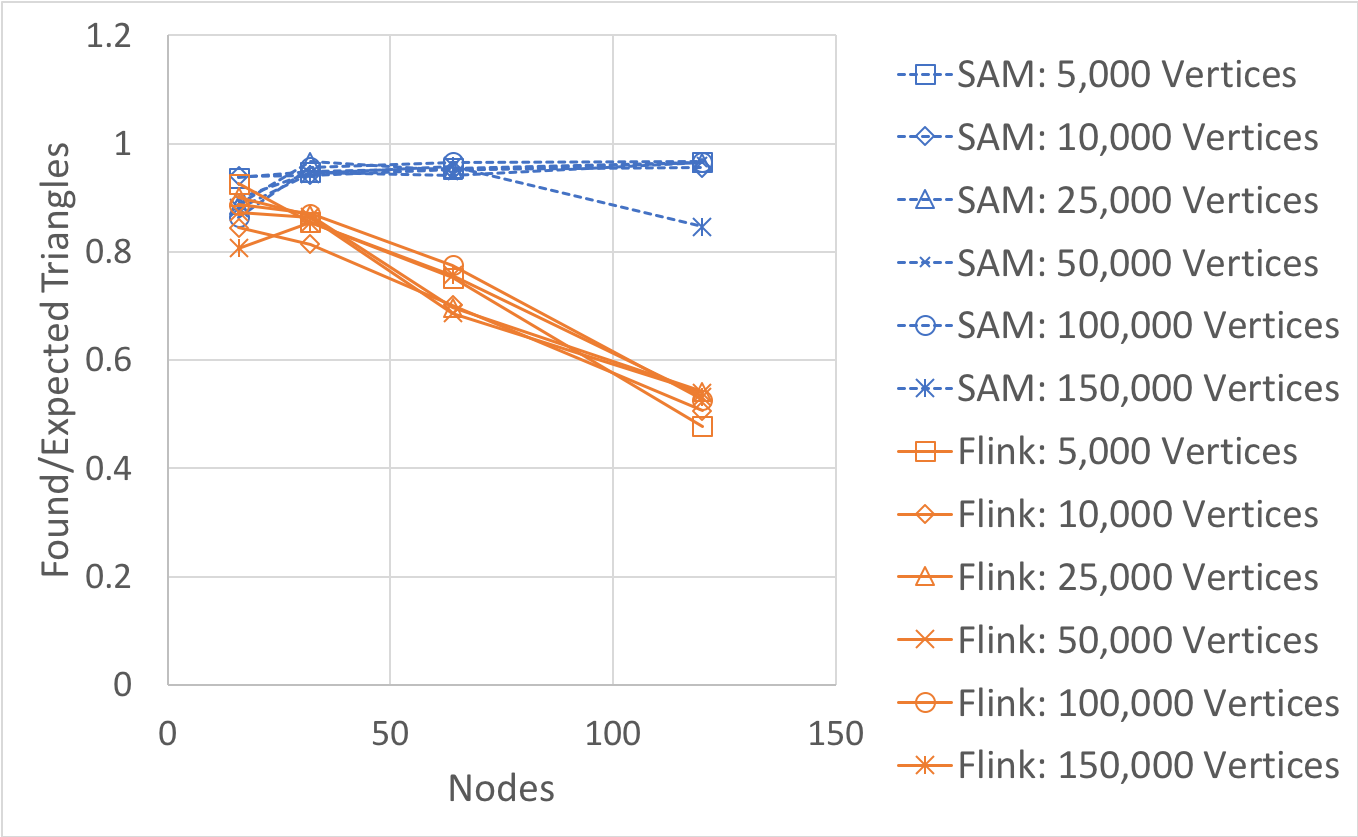}
\caption{This graph shows the ratio of found triangles divided by the number
of expected triangles. 
}
\label{figure:triangle_ratio}
\end{minipage}
\end{figure*}

For the Apache Flink runs we set both the job manager and task manager
heap sizes to be 32 GBs. 
The number of taskmanager task slots was set to 1.  As we only ever ran one Flink job at a time, this seemed to be the most appropriate setting.  The number of sources was the size
of the cluster.

For both implementations,
we kept increasing throughput until run time exceeded $|V| / r + 15$ seconds, 
where $r$ is edges produced per second.  
Latencies longer than 15 seconds indicated that the 
implementation was unable to keep pace.    

Figure \ref{figure:max_aggregate_rate_lax} shows the overall 
max aggregate throughput achieved by SAM and Flink.  We varied the 
number of vertices, $|V|$, to be 5,000, 10,000, 25,000, 
50,000, 100,000, and 150,000.  
Figure \ref{figure:netflowVsTriangles} presents the same data with
another perspective, i.e. the throughput rate as a function of the total
number of triangles produced.
For $|V| = 5,000$ or $10,000$, 
SAM obtained the best overall rate, continuing to garner increasing 
aggregate throughput to 120 nodes. Flink struggled with more than 32 nodes.  
While Flink continued to report results, the percentage of expected triangles 
fell dramatically to about 60-70\% for 64 nodes, and around 50\% for 120 nodes,
as can be seen in Figure \ref{figure:triangle_ratio}.
While Flink was not able to scale past 32 nodes, it showed 
the best overall throughput for $|V| = 50,000$ and $150,000$.  
$|V| = 25,000$ was near the middle ground, with Flink performing 
slightly better, with 32 nodes obtaining an aggregate rate of 60,800
edges per second while SAM obtained 60,000 edges per second on 120 nodes.


To summarize, SAM and Flink have competing strengths.
SAM scales to 120 nodes and excels when the triangle formation 
rate is greater than 5 per second per node.
Flink does better when triangles are rare, but only scales
to 32 nodes.  Both show some loss due to network latencies.
If some data arrives too late, it is not included in the calculations for
either SAM or Flink.  However, SAM has an advantages in this regard,
showing consistent results throughout the processor count range which
is higher than Flink.

\section{Simulating other Constraints}
\label{section:results2}

SAL can be used not only for selecting subgraphs
based on topological structure, but also on vertex  
constraints (e.g. see Listing \ref{lst:example}).
Here 
we simulate vertex constraints by randomly dropping
the first edge of the triangle query, i.e. we process the netflow
through SAM, but we force the edge to not match the first edge
of the triangle query, giving an indication of
SAM's performance with selective queries.  

We compare SAM's performance for when we keep
the first edge all of the time, 1\% of the time, and .1\% 
of the time.  
Figure \ref{figure:keep_queries}
presents the results.  The general trends are as expected.
As we increase $|V|$, the number of triangles created decreases,
and the rate goes up.  Also, as we increase the rate at which edges
are kept causes a decrease in the maximum achievable rate, again
because more triangles are being produced, which causes
a greater computational burden.  The maximum achievable
rate is 1,062,400 netflows per second, when we keep
0.01\% of the first edges and $|V|=150,000$, or 91.8 billion per day.

\begin{figure}
\centering
\centering
\includegraphics[width=.75\linewidth]{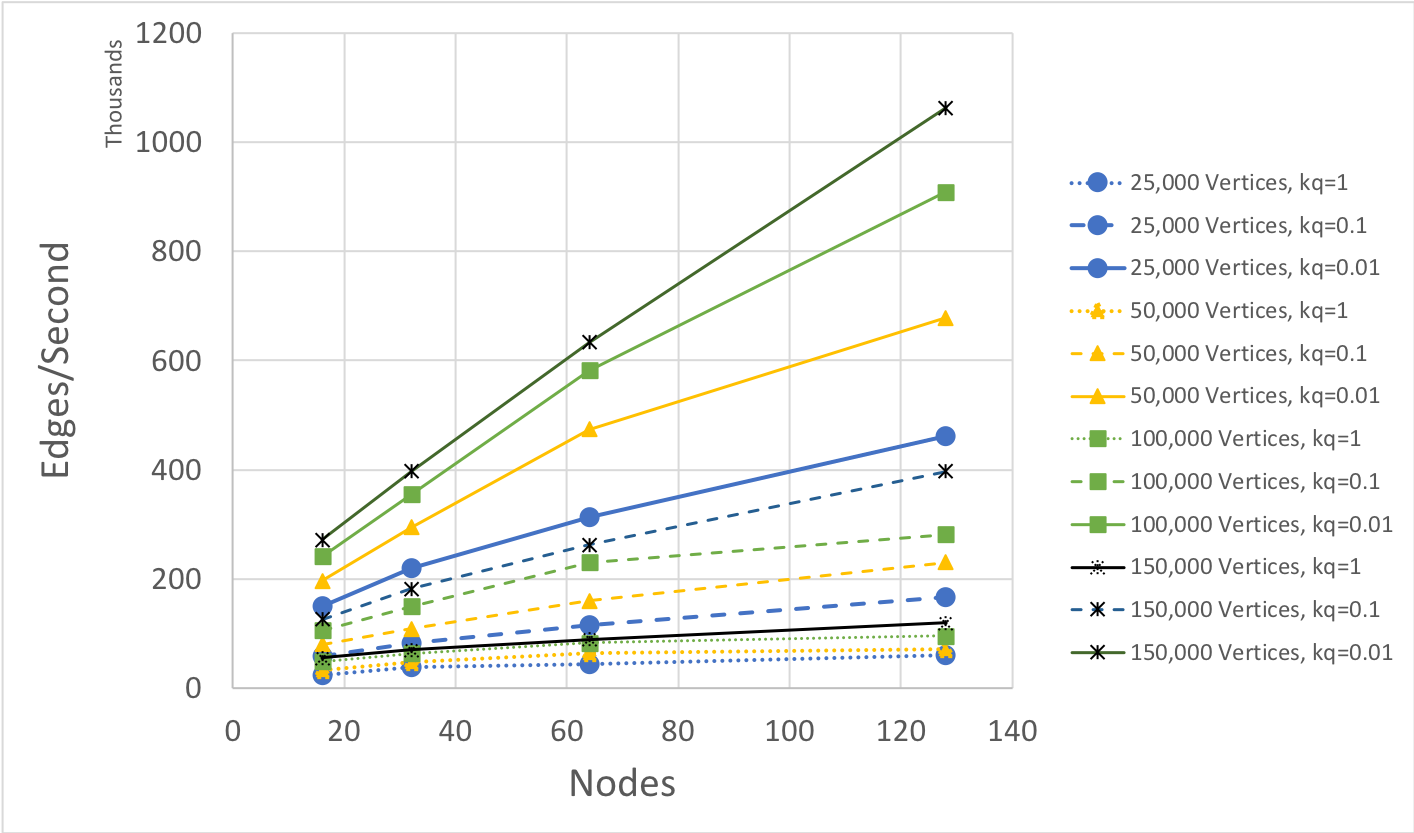}
\caption{We keep the first edges of queries ($kq$) with probabilities
$\{0.01, 0.1, 1\}$.  We also vary $|V|$ to range from 25,000 to
150,000.  This figure is a plot of edges/second versus the 
number of nodes.}
\label{figure:keep_queries}
\end{figure}


\section{Related Work}
\label{section:related}

The specification of the subgraph in SAL is influenced by SPARQL
\cite{standard:w3c:sparql}, namely the topological structure specification.
Semantic web research has produced several approaches to streaming
RDF data and continuous SPARQL evaluation, namely
C-SPARQL \cite{c-sparql-2010}, Wukong \cite{wukong-2016}, 
Wukong+S \cite{wukong-2017}, EP-SPARQL \cite{ep-sparql-2011}, 
$SPARQL_{Stream}$ \cite{sparql_stream-2010},
and CQELS \cite{cqels-2011}.  While the streaming aspect of
this work is similar to our own goals, there are several aspects not
addressed which our work targets, namely the temporal ordering
of the edges, computing attributes on streaming data, and edges
with multiple attributes. 
Finally, with the exception of Wukong+S \cite{wukong-2017} that report
scaling numbers to 8 nodes, none of the streaming RDF/SPARQL 
approaches are distributed.

Recently, many frameworks have been
developed that provide streaming APIs, including 
Storm, 
Spark, 
Flink, 
Heron, 
Samza, 
Beam, 
Gearpump, 
Kafka, 
Apex, 
Google Cloud Dataflow.
Many could be the basis of a backend implementation
for SAL.  
In Section \ref{section:results} we compared Flink with SAM
and found that each framework excelled in different
parameter regimes.  We leave as future work an exploration
into combining the strengths of each approach in a unified
setting.

Graph theory is an area rich with results, including a number of DSLs 
such as Green-Marl
\cite{green-marl-2012}, Ligra \cite{ligra-2013},
Gemini \cite{gemini-2016}, Grazelle \cite{grazelle-2018}, and
GraphIt \cite{graphit-2018}.  Except for Gemini, the implementations
of these DSLs target shared memory architectures.  
Regardless of whether these DLS approaches can be adjusted
to a distributed setting, a fundamental difference between this set
of work and our own is the streaming aspect of our approach and domain.
An underlying assumption of these DLSs is that
the data is static.  The algorithmic solutions, partitioning,
operation scheduling, and underlying
data processing rely upon this assumption.

A plethora of efforts using vertex-centric computing
exist to perform bulk synchronous parallel graph computations.
Some run only on one node,
\cite{venus-2015,
flash-graph-2015, g-store-2016}
while some support distributed computations, 
\cite{ gps-green-marl-2014,
powerlyra-2015, graphd-2018, pregelix-2014, chaos-2015}.
Some support only in-memory computations
\cite{gps-2013,powerlyra-2015} while others can utilize out-of-core
\cite{venus-2015,
 chaos-2015, flash-graph-2015, g-store-2016}.
None of the above methods provide any native
support for temporal information.  However,
it would be possible to encode temporal information
on edge data structures. 
The biggest impediment to using vertex-centric computing for a streaming
is that it is by nature a batch process.  Adapting it to a streaming
environment would likely entail creating overlapping windows of time
over which to perform queries in batch.  

\section{Conclusions}
\label{section:conclusions}

In this paper we presented a domain specific language for expressing
temporal subgraph queries on streaming data named the Streaming
Analytics Language or SAL.  We also presented the performance
of an implementation of SAL called the Streaming Analytics Machine
or SAM.  We explored finding temporal triangles, and compared SAM to a custom solution
written for Apache Flink.  SAM excels when triangles occur frequently
(greater than 5 per second per node) while Flink is best when
triangles are rare.  For all parameter settings, SAM showed improved
throughput to 128 nodes or 2560 cores, while Flink's rate started
to decrease after 32 nodes.
For more selective queries, SAM is able to obtain an aggregate rate
1 million netflows per second.  While we have presented SAM as the
implementation for SAL, Apache Flink shows promise as another potential
underlying backend.  Future work understanding the performance differences
between both platforms will likely be beneficial for both approaches.
We also have further evidence that programming in SAL is an efficient
way to express streaming subgraph queries.  SAL requires only 13
lines of code, while programming directly with SAM is 238 lines
and the Flink approach is 228 lines.

\bibliographystyle{abbrv}
\bibliography{bib/bib,bib/streaming-algorithms,bib/sparql,bib/data-stream-management-systems,bib/graph_vertex-centric,bib/graph_dsls}

\end{document}